\newcommand{\subparagraph}{}
\newtheorem{theorem}{Theorem}[section]
\newtheorem{remark}{Remark}
\title{\LARGE \bf
Data-driven identification of a thermal network in multi-zone building}
\author{Harish Doddi$^1$, Saurav Talukdar$^{1}$, Deepjyoti Deka$^{2}$  and Murti Salapaka$^{3}$ \\
\thanks{$^{1}$Harish Doddi and Saurav Talukdar is with Department of Mechanical Engineering, University of Minnesota, Minneapolis, USA,
{\tt\small doddi003@umn.edu, sauravtalukdar@umn.edu}}%
\thanks{$^{2}$Deepjyoti Deka is with Theory Division, Los Alamos National Laboratory, Los Alamos, USA,
{\tt\small deepjyoti@lanl.gov}}%
\thanks{$^{3}$Muti V. Salapaka are with Department of Electrical and Computer Engineering, University of Minnesota, Minneapolis, USA,
{\tt\small murtis@umn.edu}}
}
\begin{document}
\maketitle
\thispagestyle{empty}
\pagestyle{empty}

\begin{abstract}
System identification of smart buildings is necessary for their optimal control and application in demand response. The thermal response of a building around an operating point can be modeled using a network  of interconnected resistors with capacitors at each node/zone called RC network. The development of the RC network involves two phases: obtaining the network topology, and estimating thermal resistances and capacitance's. In this article, we present a provable method to reconstruct the interaction topology of thermal zones of a building solely from temperature measurements. We demonstrate that our learning algorithm accurately reconstructs the interaction topology for a $5$ zone office building in EnergyPlus with real-world conditions. We show that our learning algorithm is able to recover the network structure in scenarios where prior research prove insufficient.
\end{abstract}

\section{Introduction}
As complex cyber-physical systems, commercial buildings are one of the largest consumers of electricity in the United States. According to \cite{energy2010energy}, there are more than 4.7 million commercial buildings and 114 million households that together consume nearly 40$\%$ of the total power produced. Moreover they contribute to 40$\%$ of $CO_{2}$ emissions according to U.S Green Building Council \cite{energy2011buildings}. Optimal control of smart buildings is thus an important focus of smart grid research and deployment. In a commercial building, Heating Ventilation and Air Conditioning (HVAC) systems is the largest consumer of power. Hence there is increased research interest in designing zero-energy and smart buildings. Such goals are targeted with the additional objective of ensuring occupant's comfort. Strategies leverage flexible operations and integration of renewable resources using techniques such as building pre-cooling and coordinated dynamic control \cite{nagarathinam2017energy} based on load prediction. Advanced predictive HVAC control schemes such as Model Predictive Control (MPC) are reported to achieve 12-15\% savings in energy consumption without compromising on occupant thermal comfort \cite{nagarathinam2017energy}. The efficiency of MPC and other real-time control policies depend on an accurate and tractable thermal model of the concerned building. In particular, changing occupancy and weather conditions over time (medium to long term) may affect the necessary thermal models and hence modify the optimal building controls. Developing an accurate thermal model of smart buildings is thus a necessary step for their control. It needs to be mentioned that in typical commercial buildings, the Building Management/Automation System monitors all the operations of the building such as door access, thermostat set-points, HVAC operating points, lighting schedules and interfacing to sensors and actuators. 
Modeling of buildings can be categorized into several sub-types. Effective simulation tools for buildings include EnergyPlus \cite{crawley2000energyplus} and Trnsys \cite{trnsys2000transient}. These simulation engines are called \emph{white box models} as they consider extensive details of the building physics and associated HVAC systems in the thermal models. EnergyPlus, in particular, is a widely accepted standard in industry for energy simulation, equipment sizing and design of controllers. It can also be integrated with other computational engines to test sophisticated controllers using Building Controls Virtual Test Bed \cite{wetter2008building}. Although EnergyPlus simulates the thermal response of the building accurately, it involves several (hundreds) physical parameters and performing time-consuming Computational Fluid Dynamics (CFD) simulations to determine heat transfer coefficients is nearly impossible for real-time control. This complexity of such high-fidelity engines makes real-time model update and inference computationally prohibitive - more so in multi-zone buildings. 

In contrast to white box models, there are two other types of modelling approaches, namely \emph{black box models} and \emph{grey box models}. Black box models  use techniques such as linear regression, neural networks \cite{karatasou2006modeling} to fit the available input/output data. Indeed their performance depends greatly on the quality of the input and output data. Grey box models \cite{reynders2014quality} uses a hybrid mix of white box and black box modeling, and are used to obtain fast estimates of coarse-grained system conditions. Grey box models consider some prior information of system dynamics and augment them with measurements for system identification. One of the well-known grey-box models for thermal systems/buildings is the Resistor-Capacitor (RC) model, where a thermal circuit is used to represent the heat transfer dynamics in the building.
Prior work on grey box modeling of building using RC networks assumed the structure of the RC network (based on experience/ physical insight of the expert) \cite{goyal2010modeling} and utilize time series measurements from the building zones to estimate the resistance and capacitance parameters of the RC network. However, in the case of open-plan office building, the structure of the RC network is transient and unknown. \cite{6161387} uses a two step approach to obtain the RC network: in the first step the authors identify the network structure using approaches developed for structure learning of Gaussian graphical models followed by determination of the resistance and capacitance parameters from the input/output data. The drawback of the first step is that, it performs poorly when samples are correlated, which is the case in real scenarios.

RC networks are an example of undirected networked linear dynamical systems. Learning the structure of network of linear dynamical systems from nodal time series measurements is an active area of research \cite{materassi2012problem,dankers2016identification, gonccalves2008necessary, pereira2010learning}. These works primarily focus on directed networks of linear dynamical systems or assume uncorrelated inputs. Structure learning in undirected linear systems has been explored recently for radial networks \cite{talukdar2017radial} and loopy networks \cite{talukdar2017learning,talukdar2017consensus} in power systems and multi-agent distributed systems. These articles utilize properties of multivariate Wiener filters to provide consistent structure estimation. We employ a similar approach in this article, and present an algorithm for exact topology reconstruction of RC networks from nodal temperature measurements. A distinguishing feature of the approach presented is that heat generation or loss term (exogenous disturbance) at each node is allowed to be a colored input, unlike white time series in previous work \cite{pereira2010learning}. We also develop a regularized version of the method to improve the accuracy of the learning algorithm in low sample regime. We are restricting to sparsity based regularization. The work presented in this article applies to non-positive systems too. Note that, no additional knowledge about the building structure or information (samples or sufficient statistics) of system inputs are assumed in our learning framework.

To demonstrate the efficacy of our algorithm, we consider a  five zone building model in EnergyPlus with an equivalent RC-network (derived from physics) given in \cite{6315699}. We show that the topology obtained by using our algorithm on zonal temperature data from its rooms with correlated input disturbances matches the true RC network topology. We show that our method outperforms learning algorithms for uncorrelated inputs \cite{pereira2010learning} and static graphical models used in prior work \cite{6161387}. We also demonstrate performance improvements in our learning algorithm with low samples due to regularization.

The remaining paper is organized as follows. Section II describes the Resistor-Capacitor Model. In Section III, the description of our EnergyPlus model is given. In Section IV, framework of the topology learning algorithm is presented. Results of our algortihm are discussed in Section V followed by conclusion in Section VI. 

\section{Resistor-Capacitor (RC) Model}
Consider a building with $m$ thermal zones exchanging energy with each other through conduction, convection and radiation. The thermal dynamics around an operating point can be modeled using a RC network, which is defined as a set of zones and edges, where edges are resistors and nodes are modelled as capacitors (see Fig.~\ref{fig:energyplus_5node}(b)). The topology of the RC network is defined as an undirected graph $\mathcal{G}=(\mathcal{V},\mathcal{E})$ where node set $\mathcal{V} = \{1,2,...,m\}$ represents zones and $\mathcal{E}$ denotes the edge set with $(j,i) \in \mathcal{E}$ for $i,j \in \mathcal{V}$ if $R_{ij} \neq 0$ (see Fig.~\ref{fig:energyplus_5node}(c)). In the undirected graph $\mathcal{G}$, \emph{neighbors} of node $j$ are defined as the elements of its neighborhood set, ${N}_j:=\{i\in \mathcal{V}:(i,j)\in \mathcal{E}\}$. The \emph{two hop neighbors} of node $j$ are defined as the elements of the set, ${N}_{j,2}:=\{i \in \mathcal{V}: (j,k),(i,k) \in \mathcal{E}, \text{for some} \ k \in {\cal V}\}$. 

\begin{figure}[tb]
	\centering
	\begin{tabular}{cc}
		\includegraphics[width=0.5\columnwidth, height = 3 cm]{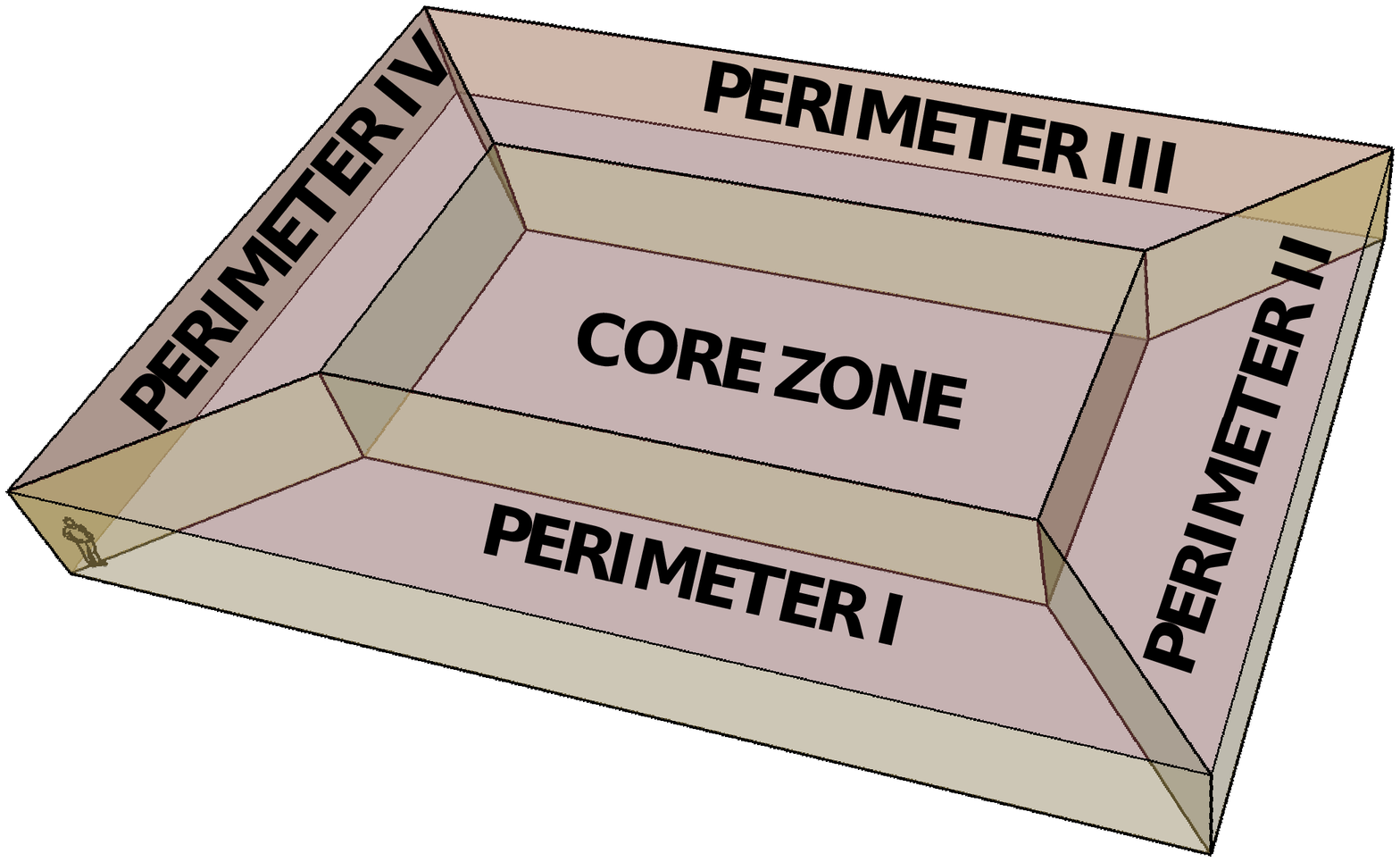}&
		\includegraphics[width=0.3\columnwidth]{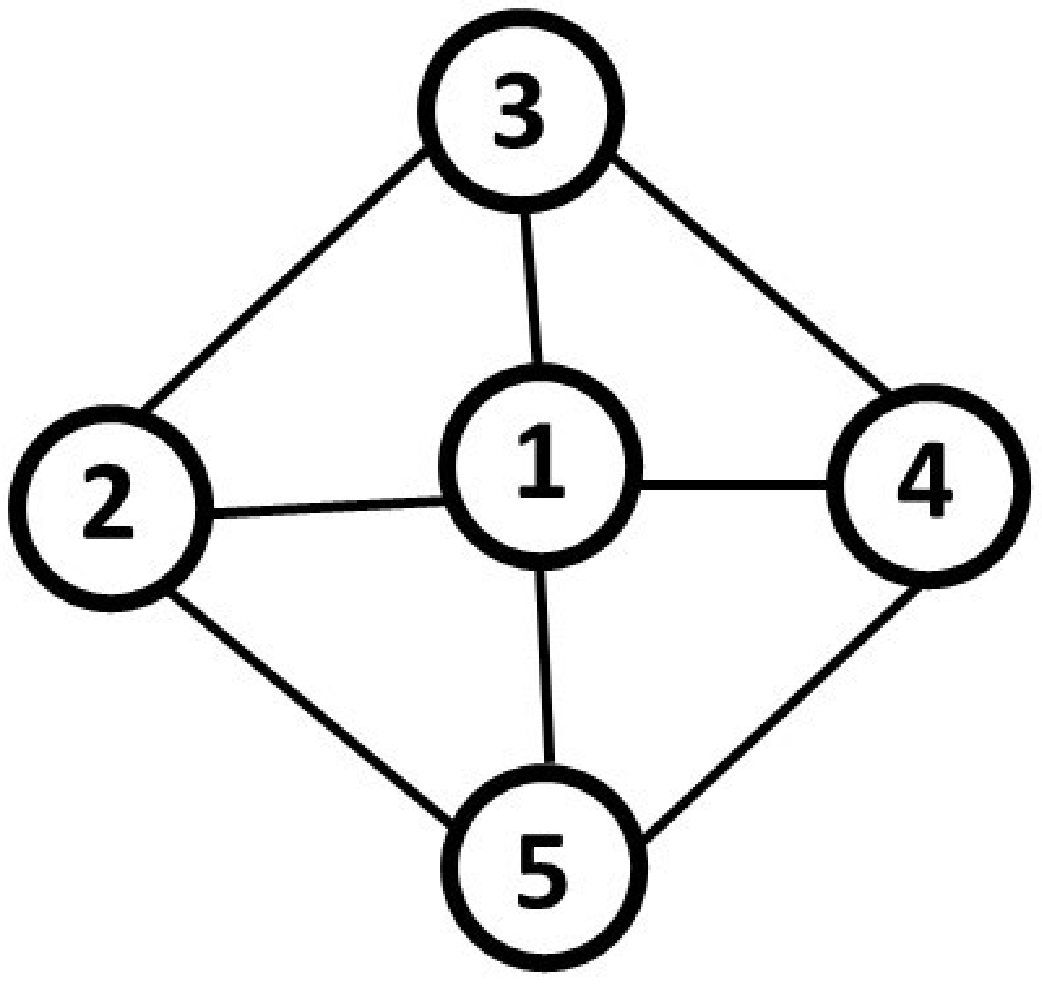}\\
		(a) & (c)
	\end{tabular}
		\begin{tabular}{cc}
		\includegraphics[width=0.4\columnwidth, height = 2.5 cm]{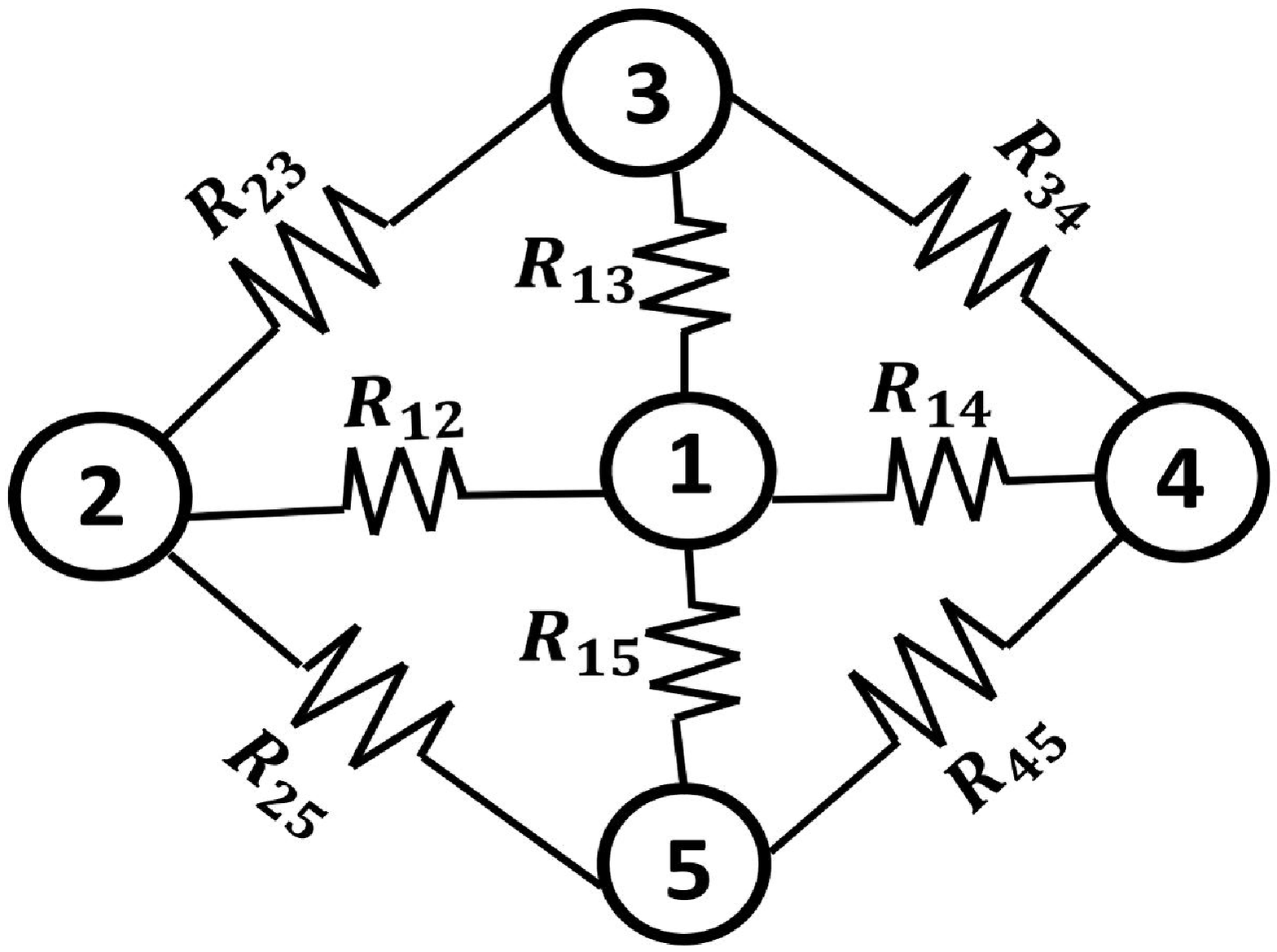}&
		\includegraphics[width=0.35\columnwidth]{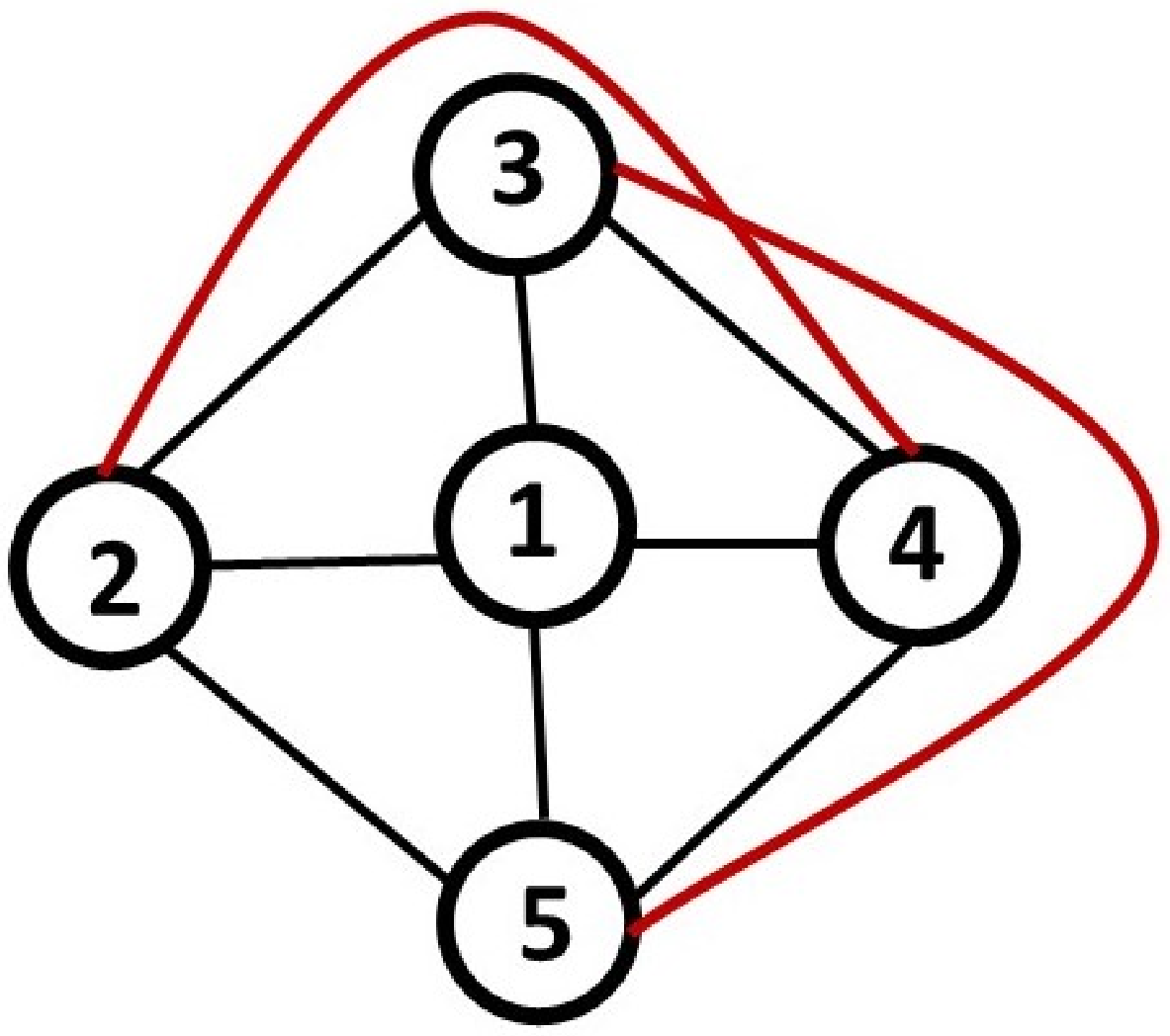}\\
		(b) & (d) 
	\end{tabular}
	\caption{(a) EnergyPlus model consisting of $5$ zones, (b) Thermal resistor (RC) network of the building with Capacitor at each node. Node 1 corresponds to core zone and rest of the nodes refer to perimeter zones (c) True topology of the RC network (d) Wiener reconstructed graph}\label{fig:energyplus_5node}
\end{figure}

For thermal modeling, each edge $(j,i)$ is associated with a thermal resistance ${R}_{j,i} = {R}_{i,j}>0$ and each node $i \in \mathcal{V}$ is associated with a capacitor with capacitance $\mathcal{C}_i$ . Let $T_j$ denote the deviation of temperature from the operating/equilibrium point of zone $j$. Then $T_{j}$ satisfies heat balance equation:

\begin{align} \label{eqn:zone}
C_j\frac{dT_j}{dt} = \sum_{i \in N_j}\frac{T_i-T_j}{R_{ji}} + q_j,
\end{align}
where, ${q}_j$ is the total internal heat generated in zone ${j}$. Writing the evolution of temperature deviations at all nodes in matrix form results in the following:
\begin{align}
    [C]\frac{dT}{dt} = [R]T + q \label{eqn:LTI_thermal}
\end{align}
where, $T$ and $q$ are the vectors associated with zonal temperatures and internal heat; $[C]$ is the diagonal matrix of nodal capacitance and $[R]$ denotes the inter-zone resistances ($R_{ij}$). Note that (\ref{eqn:LTI_thermal}) has the form of a networked linear-time invariant (LTI) system with matrix $[R]$  encoding the network information. Learning the topology of the RC-network entails identifying the location of non-zero off-diagonal terms in matrix $R$. Next we discuss the mathematical representation of zonal temperature evolution in EnergyPlus software that we use in subsequent discussion.

\section{EnergyPlus Model}
EnergyPlus is a building energy simulation program developed by U.S. Department of Energy. It is an open-source software available at \url{https://energyplus.net/downloads}. EnergyPlus accepts a text file for providing details on construction of the building, its location, ambient conditions, occupancy, electrical equipment, lighting loads on the building and HVAC. It is a sophisticated forward simulation tool which is useful in thermal analysis of a building.

EnergyPlus assumes each thermal zone as a single node and solves the heat balance equations to arrive at the thermal zone temperatures developed in the building and its power consumption. The heat balance equation for a zone \cite{doe2010energyplus} is given by:
\begin{align}
&C_{z}\frac{dT_z}{dt} = \sum\limits_{i=1}^{N_{sl}}\Dot{Q_{i}} +\sum\limits_{i=1}^{N_{surf}}h_{i}A_{i}(T_{si}-T_{z}) \\
&~+\sum\limits_{i=1}^{N_{zones}}\Dot{m_{i}}C_{p}(T_{zi}-T_{z})+\Dot{m}_{inf}C_{p}(T_{\infty}-T_{z})+\Dot{Q}_{sys}\nonumber\end{align}
where $C_z\frac{dT_z}{dt}$ is the energy stored in zone air, $\sum \Dot{Q_{i}}$ is internal convective load,$\sum h_{i}A_{i}(T_{si}-T_{z})$ is surface convective heat transfer, $\sum\Dot{m_{i}}C_{p}(T_{zi}-T_{z})$ is interzone heat transfer, and
$\Dot{m_{inf}}C_{p}(T_{\infty}-T_{z})$ is heat transfer by air infiltration, $\Dot{Q_{sys}}$ is air systems output.

In this article, we consider a representative building with a building envelope created with Google SketchUp Make 2017 \cite{ellis2008energy}. The RC network of the EnergyPlus model is obtained from {\cite{6315699}} and is shown in Fig. \ref{fig:energyplus_5node}(b). In EnergyPlus, the exogenous inputs (unmeasured) to the building are internal heat loads such as heat generation by lights, electrical equipment and people in a building. The output of the model is the zone air temperature, which is the instantaneous temperature averaged over space for each zone. Note that the different operating parameters and weather data may be temporally correlated. In a real setting, an operational building with Building Management System (BMS) may provide the data, replacing EnergyPlus.

In the next section, we leverage the multivariate Wiener filtering approach \cite{materassi2012problem} to develop an exact learning algorithm of RC networks under wide-sense stationary inputs. In subsequent sections, we show that our learning algorithm is able to recover the structure from the temperature data.

\section{Learning Topology from Temperature Time Series}
To develop our learning algorithm, we first discretize it. The discretized form of continuous time zonal thermal dynamics described by (\ref{eqn:zone}) is given by: 
\begin{align}
T_j(k+1) = a_{jj}T_j(k) + \sum_{i \in N_j} a_{ji}T_i(k) + p_j(k)
\end{align}
where, $p_j(k):=q_j(k)\delta t /C_j$ is the total internal load in the zone $j$ per unit thermal capacitance at instant $k$, $a_{jj}= \delta t(1-\sum_{i \in N_j}\frac{1}{R_{ji}C_j})$,and $ a_{ji}= \frac{\delta t}{R_{ji}C_j}$ 
Here, $\delta t$ is the measurement sampling time and is taken as $1$ min in this article. All $a_{ij}$'s are non-negative. The undirected nature of the interaction topology ($R_{ij} = R_{ji}$) implies that, $a_{ji} \neq 0$ if and only if $a_{ij} \neq 0$. In vector form the network dynamics can be written as,
\begin{align}
 T(k+1)=AT(k)+P(k),
\end{align}

where, $T(k)=[T_1(k), T_2(k),\cdots, T_m(k)]^{T}$, $P(k) = [p_1(k),p_2(k),\cdots, p_m(k)]^T$ and the matrix $A\in \mathbb{R}^{m \times m}$ is such that $A(j,i) = a_{ji}$. Each element of $P(k)$ is assumed to be zero mean wide sense stationary (WSS) process uncorrelated with any other component of $P(k)$. Note that $P(k)$ can have non-zero temporal correlation. Using the $z$ transform of the correlation function of $P(k)$, the power spectral density matrix $\Phi_P(z)$ is derived. From property of WSS signals and spatial un-correlation, it follows that $\Phi_P(z)$  is a diagonal matrix and $\Phi_P(e^{\hat{j}\omega})$ is real and even function of frequency $\omega$. Furthermore, we assume the noise spectrum at any node $j$, $\Phi_{p_j}(z) > 0$ almost surely (referred to as \textbf{topological detectability} condition). 

The discrete time dynamics of the deviation variable ${T}_j$ of the $j^{th}$ node in the $z$ domain 
can be cast as, 
\begin{align}\label{eqn:zdomnode}
		{\mathcal{T}}_j(z)=\sum_{i\in \mathcal{N}_j }\mathcal{H}_{ji}(z){\mathcal{T}}_i(z) + E_j(z), j = 1,2, \cdots, m.
\end{align}
where, $\mathcal{H}_{ji}(z)=\frac{a_{ji}}{S_j(z)}$, $E_j(z) =\frac{1}{S_j(z)}p_j(z)$ with $S_j(z):=z-a_{jj}$. 
If $i$ is not a neighbor of $j$, then $\mathcal{H}_{ji}(z) = 0, \mathcal{H}_{ij}(z) = 0$. In compact form the dynamics of (\ref{eqn:zdomnode}) is written as,
\begin{align}\label{eqn:netdyn}
    {\mathcal{T}}(z)=H(z){\mathcal{T}}(z) + \mathcal{E}(z).
\end{align}
The transfer function matrix $H(z)$ defines the network dynamics and also specifies the RC network topology. The diagonal entries of $H(z)$, $\mathcal{H}_{jj}(z) = 0$. Based on the assumptions on $P(k)$, $\mathcal{E}(z):=[{E}_1(z),\cdots,E_m(z)]^{T}$ is a collection of the filtered version of z-transform of uncorrelated zero mean WSS processes $\{p_1(k),...,p_m(k)\}$ respectively. The power spectral density matrix, $\Phi_E(z)$ is a diagonal matrix, with $\Phi_E(z)(j,j) = \Phi_{E_j}(z) = \Phi_{p_j}(z)/|S_j(z)|^2$. We assume that $I - H(z)$ is invertible almost surely (referred to as \textbf{well posedness} condition).

Let $T_j(k)$ be the temperature deviation from operating point. Consider the following least square optimization problem on the Hilbert space of $\mathcal{L}_2$ random variables, 
\begin{align}
	&\smashoperator[lr]{\inf_{{\{h_{ji}\}}_{i=1,...,m,i\neq j}}}~~ \mathbb{E}(T_j(k)-
	\sum_{i=1,i\neq j}^{m}\sum_{L=-\infty}^{\infty}h_{ji}^{L}T_i(k-L))^2
	, \label{wiener}
\end{align}
	where, $h_{ji}=[h_{ji}^{-\infty},...,h_{ji}^{0},...,h_{ji}^{\infty}]$.
	The solution to the above infinite dimensional optimization problem \cite{materassi2012problem} is the \emph{multivariate non-causal Wiener filter}), $W_j(z)=[W_{j,1}(z),...,W_{j,j-1}(z),W_{j,j+1}(z),...,W_{j,m}(z)]$, where,
	\begin{align}\label{solform}
	    W_{ji}(z) = \sum_{L=-\infty}^{\infty}h_{ji}^{L}z^{-L}. 
	\end{align}
	It is important to note that the multivariate Wiener filter, the solution to optimization problem (\ref{wiener}), can be determined from the temperature time series measurements without knowledge of the topology or the parameters of the dynamics. The next result connects Wiener filters with the underlying topology of RC networks.  

\begin{theorem}\label{thm:sparse_wiener}
	Consider the matrix of Wiener filters, $W(z)$ such that, $W(j,i)(z):=W_{ji}(z)$ which is the solution to (\ref{wiener}) and has the form (\ref{solform}). Then, $W(j,i)(z) \neq 0$ almost surely implies $i,j$ are neighbors or two hop neighbors of each other.
\end{theorem}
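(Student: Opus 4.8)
The plan is to reduce the claim to a support (zero-pattern) analysis of the inverse power spectral density (PSD) matrix of the temperature process. First I would recall the standard characterization of the multivariate non-causal Wiener filter in terms of $\Phi_T(z)$, the PSD matrix of $\mathcal{T}(z)$. Writing $J(z):=\Phi_T^{-1}(z)$, the orthogonality (normal) equations associated with (\ref{wiener}) read $\Phi_T(j,l)(z)=\sum_{i\neq j}W_{ji}(z)\,\Phi_T(i,l)(z)$ for every $l\neq j$. Comparing these with row $j$ of the identity $J(z)\Phi_T(z)=I$, namely $J(j,j)\Phi_T(j,l)+\sum_{k\neq j}J(j,k)\Phi_T(k,l)=0$ for $l\neq j$, and invoking uniqueness of the least-squares solution, I obtain $W_{ji}(z)=-J(z)(j,i)/J(z)(j,j)$. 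The topological detectability and well-posedness assumptions guarantee $J(j,j)(z)\neq 0$ almost surely, so that $W(j,i)(z)\neq 0$ if and only if $J(z)(j,i)\neq 0$. The problem thus becomes the characterization of the zero pattern of $J(z)$.

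Next I would compute $J(z)$ in closed form. From (\ref{eqn:netdyn}) we have $\mathcal{T}(z)=(I-H(z))^{-1}\mathcal{E}(z)$, so on the unit circle $\Phi_T(z)=(I-H(z))^{-1}\,\Phi_E(z)\,(I-H^{\top}(1/z))^{-1}$, where $H^{\top}(1/z)$ denotes the transpose of $H$ evaluated at $1/z$ (the paraconjugate, obtained from the Hermitian conjugate on $z=e^{\hat{j}\omega}$ for a real-coefficient system). Inverting and using that $\Phi_E(z)$ is diagonal yields the key factorization $J(z)=(I-H^{\top}(1/z))\,\Phi_E^{-1}(z)\,(I-H(z))$ with $\Phi_E^{-1}(z)$ diagonal. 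Writing $G(z):=I-H(z)$, the $(j,i)$ entry expands as $J(z)(j,i)=\sum_{k}G(k,j)(1/z)\,\Phi_{E_k}^{-1}(z)\,G(k,i)(z)$.

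The support argument is then purely combinatorial and rests on the sparsity of $H(z)$. Because $\mathcal{H}_{ji}(z)\neq 0$ exactly when $(j,i)\in\mathcal{E}$, the factor $G(k,j)(1/z)=\delta_{kj}-\mathcal{H}_{kj}(1/z)$ is nonzero only for $k\in\{j\}\cup N_j$, and $G(k,i)(z)=\delta_{ki}-\mathcal{H}_{ki}(z)$ is nonzero only for $k\in\{i\}\cup N_i$. Hence a summand of $J(z)(j,i)$ can be nonzero only when the same index $k$ lies in both $\{j\}\cup N_j$ and $\{i\}\cup N_i$. If no such common $k$ exists then $J(z)(j,i)\equiv 0$; conversely, for $i\neq j$ the existence of a common $k$ forces $k=j$ (so $j\in N_i$, neighbors), or $k=i$ (so $i\in N_j$, again neighbors), or $k\neq i,j$ with $k\in N_i\cap N_j$ (a shared neighbor, so $i\in N_{j,2}$). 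Taking the contrapositive, $W(j,i)(z)\neq 0$ implies that $i,j$ are neighbors or two-hop neighbors.

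I expect the main obstacle to lie in the first step rather than the combinatorics. Specifically, I must rigorously justify the Wiener-filter/inverse-PSD identity in the present non-causal, colored-input setting: confirming that the relevant principal submatrices of $\Phi_T(z)$ are invertible almost surely so that the normal equations admit a unique solution, and verifying that the "almost surely" qualifier is correctly inherited from the topological detectability condition $\Phi_{p_j}(z)>0$ and the well-posedness of $I-H(z)$. The support analysis itself is deterministic and uses only the edge-support structure of $H(z)$ together with the diagonality of $\Phi_E(z)$.
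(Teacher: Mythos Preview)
Your proposal is correct and follows essentially the same approach as the paper: the paper merely cites \cite{materassi2012problem}, while you spell out the argument behind that citation --- the Wiener-filter/inverse-PSD identity followed by the factorization $\Phi_T^{-1}(z)=(I-H)^*\Phi_E^{-1}(I-H)$ and the resulting support analysis, which is exactly what the paper itself writes out in (\ref{eqn:wienerphase})--(\ref{eqn:Phinv}) for the next theorem. Your anticipated obstacle (justifying the identity $W_{ji}=-J(j,i)/J(j,j)$ under well-posedness and topological detectability) is precisely the content absorbed into that citation.
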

The proof follows from applying the main result of \cite{materassi2012problem} to undirected (bi-directed) graph with linear dynamics.
\begin{remark}
 The above result does not guarantee that if $i \in \mathcal{N}_j \cup \mathcal{N}_{j,2}$, then $W_{ji}(z) \neq 0$. However, such cases are pathological (see \cite{materassi2012problem}).
\end{remark}

 Using the non-zero entries in $W(z)$, a graph $\mathcal{G}_M$ (also known as moral graph) can be constructed with the vertex set $V$. The graph $\mathcal{G}_M$, has all the edges in the underlying topology of the RC network with additional spurious edges due to two hop neighbors. Pruning out the spurious two hop neighbor edges will result in recovery of the underlying RC network topology. Next we present results which will enable us to prune out the spurious two hop neighbor links from $\mathcal{G}_M$. 
 
\subsection{Pruning Out Spurious Two-hop Neighbors Links}
\vspace{-.1cm}
Our pruning step is based on the phase response of Wiener filters. We will use the following relationship between $W_{ji}(z)$ and $\Phi_{T}(z)^{-1}(j,i)$ as derived in \cite{materassi2012problem},
\begin{align}\label{eqn:wienerphase}
 W_{ji}(z) = -\Phi_T^{-1}(z)(j,i)\Phi_{E_j}(z).  
\end{align}
It follows from (\ref{eqn:netdyn}) that,
\begin{align}
    \Phi_{T}^{-1}(z) = (I-H(z))^{*}\Phi_E^{-1}(z)(I-H(z)),
\end{align}
where, $I$ is a $m\times m$ identity matrix. Then,
\begin{align}\label{eqn:Phinv}
\Phi_{T}^{-1}(z)(j,i) &= -\mathcal{H}_{ji}(z)\Phi_{E_j}^{-1}(z)-\mathcal{H}_{ij}^*(z)\Phi_{E_i}^{-1}(z) \nonumber \\
&+ \sum_{k \in N_{j}\cap N_{i}} \mathcal{H}_{kj}^*(z)\mathcal{H}_{ki}(z)\Phi_{E_k}^{-1}(z).
\end{align}
\begin{theorem}\label{thm:expression}
Let zones $i$ and $j$ in a RC network be such that, $i \in N_{j,2}$ but $i\notin N_{j}$,i.e., $i,j$ are strict two-hop neighbors. Then, $\angle W_{ji}(e^{\iota \omega}) = \pi$, for all $\omega \in [0, 2 \pi)$.
\end{theorem}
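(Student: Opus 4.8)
The plan is to start from the closed-form expression for $\Phi_T^{-1}(z)(j,i)$ in (\ref{eqn:Phinv}), specialize it to the strict two-hop case, and evaluate everything on the unit circle $z = e^{\iota\omega}$ to exhibit $W_{ji}(e^{\iota\omega})$ as a strictly negative real number for every $\omega$. Since $i \notin N_j$, I would first invoke the stated fact that non-adjacency forces $\mathcal{H}_{ji}(z) = \mathcal{H}_{ij}(z) = 0$, which annihilates the first two terms of (\ref{eqn:Phinv}). This leaves only the coupling through common neighbors, namely $\Phi_T^{-1}(z)(j,i) = \sum_{k \in N_j \cap N_i} \mathcal{H}_{kj}^*(z)\mathcal{H}_{ki}(z)\Phi_{E_k}^{-1}(z)$, and because $i \in N_{j,2}$ the index set $N_j \cap N_i$ is nonempty.

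Next I would evaluate each summand at $z = e^{\iota\omega}$. Using $\mathcal{H}_{kj}(z) = a_{kj}/S_k(z)$ with $a_{kj}, a_{ki}$ real and nonnegative, the product becomes $\mathcal{H}_{kj}^*(e^{\iota\omega})\mathcal{H}_{ki}(e^{\iota\omega}) = a_{kj}a_{ki}/|S_k(e^{\iota\omega})|^2$. Combining this with $\Phi_{E_k}^{-1}(e^{\iota\omega}) = |S_k(e^{\iota\omega})|^2/\Phi_{p_k}(e^{\iota\omega})$ causes the $|S_k|^2$ factors to cancel, so each summand collapses to $a_{kj}a_{ki}/\Phi_{p_k}(e^{\iota\omega})$. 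The crucial observation is that every factor here is real and strictly positive: $a_{kj}, a_{ki} > 0$ because $k$ is a genuine neighbor of both $i$ and $j$, and $\Phi_{p_k}(e^{\iota\omega}) > 0$ by the topological detectability assumption. Hence $\Phi_T^{-1}(e^{\iota\omega})(j,i)$ is a sum of strictly positive reals and is therefore real and strictly positive.

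Finally, I would substitute into the Wiener relation (\ref{eqn:wienerphase}), $W_{ji}(e^{\iota\omega}) = -\Phi_T^{-1}(e^{\iota\omega})(j,i)\Phi_{E_j}(e^{\iota\omega})$. Since $\Phi_{E_j}(e^{\iota\omega}) = \Phi_{p_j}(e^{\iota\omega})/|S_j(e^{\iota\omega})|^2$ is also real and positive, the right-hand side is the negative of a product of positive reals, i.e. a strictly negative real number, giving $\angle W_{ji}(e^{\iota\omega}) = \pi$ for all $\omega$. I do not anticipate a serious obstacle, but two points require care: (i) confirming that the $|S_k|^2$ terms cancel \emph{exactly} rather than merely up to a frequency-dependent real scalar, which relies on the specific rational form $\mathcal{H}_{ki}(z) = a_{ki}/S_k(z)$ inherited from the undirected RC structure (a single shared denominator per node with real, positive numerators); and (ii) ensuring the cross-spectral entry is nonzero so that the phase is well defined, which is secured by the strict positivity of each summand together with $N_j \cap N_i \neq \emptyset$.
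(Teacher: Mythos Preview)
Your proposal is correct and follows essentially the same route as the paper: use non-adjacency to kill the first two terms of (\ref{eqn:Phinv}), note $N_j\cap N_i\neq\emptyset$ by the two-hop hypothesis, and then combine with (\ref{eqn:wienerphase}) on the unit circle to conclude $\angle W_{ji}=\pi$. The paper's proof is considerably terser and leaves the positivity/cancellation computations you spell out (the $|S_k|^2$ cancellation and the strict positivity of each summand via $a_{kj},a_{ki}>0$ and topological detectability) to the reader, so your version is simply a more explicit rendering of the same argument.
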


\begin{proof}
Since, $i$ and $j$ are not neighbors in the underlying RC network, then $\mathcal{H}_{ji}(z) = 0$ and $\mathcal{H}_{ij}(z)=0$. Moreover, $i$ and $j$ are two hop neighbors, so $N_j\cap N_i$ is non empty. Thus, using (\ref{eqn:Phinv}) and substituting $z$ with $e^{\iota \omega}$, it follows from (\ref{eqn:wienerphase}), that if $i$ and $j$ are two hop neighbors and are not neighbors, then $\angle W_{ji}(e^{\iota \omega}) = \pi$, for all $\omega \in [0, 2 \pi)$.
\end{proof}

The above theorem can identify spurious edges between nodes which are strict two hop neighbors if it does not hold for nodes that are true neighbors. The next theorem lists conditions under which $\angle (W_{ji}(e^{\hat{j}\omega})) = \pi$ for all $\omega \in [0,2\pi)$ for nodes $i$ and $j$ that are neighbors.

\begin{theorem}\label{thm:notpiresult2}
Given a well-posed and topologically detectable RC network the following holds:
\begin{enumerate}[leftmargin =*]
    \item Suppose nodes $i$ and $j$ are such that $i \in N_j$, $i\not\in {N}_{j,2}$, $\angle (W_{ji}(e^{\hat{j}\omega})) =  \pi$ for all $\omega \in [0, 2\pi)$. Then
    \begin{align*}
    &\text{Real}(-{a_{ji}}{S_j^{*}(e^{\iota \omega})}\Phi_{p_j}^{-1}(e^{\iota \omega})
    -{a_{ij}}{S_i(e^{\iota \omega})}\Phi_{p_i}^{-1}(e^{\iota \omega}))>0,\nonumber\\
   &\text{Imag}(-{a_{ji}}{S_j^*(e^{\iota \omega})}\Phi_{p_j}^{-1}(e^{\iota \omega})
   - {a_{ij}}{S_i(e^{\iota \omega})}\Phi_{p_i}^{-1}(e^{\iota \omega}))=0\nonumber
    \end{align*}
    for all $\omega \in [0, 2\pi)$.
    \item Suppose $i$ and $j$ are such that $i \in {N}_j$ and $i \in {N}_{j,2}$ (both one and two hop neighbors), with $\angle (W_{ji}(e^{\hat{j}\omega})) = \pi$ for all $\omega \in [0,2\pi)$. Then
     \begin{align*}
        &\text{Imag}(-a_{ij}S_i(e^{\hat{j}\omega})\Phi_{p_i}^{-1}(e^{\hat{j}\omega})- a_{ji}S_j^{*}(e^{\hat{j}\omega})\Phi_{p_j}^{-1}(e^{\hat{j}\omega}))=0,\\
        &\text{and Real}(-a_{ij}S_i(e^{\hat{j}\omega})\Phi_{p_i}^{-1}(e^{\hat{j}\omega})-
        a_{ji}S_j^{*}(e^{\hat{j}\omega})\Phi_{p_j}^{-1}(e^{\hat{j}\omega}))\\
        &+\sum_{k \in N_j \cap N_i}a_{kj}a_{ki}\Phi_{p_k}^{-1}(e^{\iota \omega}) > 0, 
        \end{align*}
    for all $\omega \in [0,2\pi)$.
\end{enumerate}
\end{theorem}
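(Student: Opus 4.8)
The plan is to reduce the phase condition on $W_{ji}$ to a reality-and-positivity condition on the single entry $\Phi_T^{-1}(e^{\iota\omega})(j,i)$, and then read off the two cases directly from the simplified form of (\ref{eqn:Phinv}). The starting point is the observation that on the unit circle $\Phi_{E_j}(e^{\iota\omega}) = \Phi_{p_j}(e^{\iota\omega})/|S_j(e^{\iota\omega})|^2$ is a strictly positive real scalar: topological detectability gives $\Phi_{p_j}(e^{\iota\omega}) > 0$, well-posedness keeps $W_{ji}$ well defined, $|S_j(e^{\iota\omega})|^2 > 0$, and $\Phi_{p_j}$ is real by the WSS assumption. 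Hence, from (\ref{eqn:wienerphase}), $W_{ji}(e^{\iota\omega}) = -\Phi_T^{-1}(e^{\iota\omega})(j,i)\,\Phi_{E_j}(e^{\iota\omega})$ has phase $\pi$ for all $\omega$ if and only if $\Phi_T^{-1}(e^{\iota\omega})(j,i)$ is a strictly positive real number for all $\omega$, i.e. its imaginary part vanishes and its real part is positive.

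Next I would substitute the explicit transfer functions $\mathcal{H}_{ji}(z) = a_{ji}/S_j(z)$ and $\Phi_{E_k}^{-1}(z) = |S_k(z)|^2/\Phi_{p_k}(z)$ into (\ref{eqn:Phinv}) and simplify each term on the unit circle. Using $|S_k|^2 = S_k S_k^*$, the three contributions collapse to
\begin{align*}
&\mathcal{H}_{ji}\Phi_{E_j}^{-1} = a_{ji}S_j^*\Phi_{p_j}^{-1}, \quad \mathcal{H}_{ij}^*\Phi_{E_i}^{-1} = a_{ij}S_i\Phi_{p_i}^{-1}, \\
&\mathcal{H}_{kj}^*\mathcal{H}_{ki}\Phi_{E_k}^{-1} = a_{kj}a_{ki}\Phi_{p_k}^{-1},
\end{align*}
so that
\[
\Phi_T^{-1}(e^{\iota\omega})(j,i) = -a_{ji}S_j^*\Phi_{p_j}^{-1} - a_{ij}S_i\Phi_{p_i}^{-1} + \sum_{k\in N_j\cap N_i} a_{kj}a_{ki}\Phi_{p_k}^{-1}.
\]
The crucial structural feature is that each summand $a_{kj}a_{ki}\Phi_{p_k}^{-1}(e^{\iota\omega})$ is real and nonnegative, because the $a$'s are nonnegative and $\Phi_{p_k} > 0$; thus the entire two-hop sum contributes only to the real part and nothing to the imaginary part.

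With this form in hand the two cases follow immediately. In case (1), $i \in N_j$ with $i \notin N_{j,2}$ means $N_j \cap N_i = \emptyset$, so the sum is absent and $\Phi_T^{-1}(j,i) = -a_{ji}S_j^*\Phi_{p_j}^{-1} - a_{ij}S_i\Phi_{p_i}^{-1}$; imposing positive-realness yields exactly the stated vanishing-imaginary-part and positive-real-part conditions. In case (2) the sum is present, and since it is real and positive, the imaginary-part requirement again constrains only the two $S$-dependent terms, while the real-part requirement absorbs the additive $\sum_{k} a_{kj}a_{ki}\Phi_{p_k}^{-1}$, reproducing the displayed inequalities.

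The only genuinely delicate point is the reduction in the first paragraph, namely verifying that $\Phi_{E_j}(e^{\iota\omega})$ is real and strictly positive so that multiplication by it preserves the phase up to the fixed sign flip; this is precisely where topological detectability ($\Phi_{p_j} > 0$) and well-posedness are used. Everything after that is the routine algebraic simplification of (\ref{eqn:Phinv}) together with careful bookkeeping of real versus imaginary parts, so I expect no further obstacle.
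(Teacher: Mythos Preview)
Your proposal is correct and is precisely the verification the paper gestures at: it invokes (\ref{eqn:wienerphase}) and (\ref{eqn:Phinv}) and leaves the computation to the reader, and you have carried out exactly that computation, reducing $\angle W_{ji}=\pi$ to positive-realness of $\Phi_T^{-1}(j,i)$ via the positivity of $\Phi_{E_j}$ and then simplifying (\ref{eqn:Phinv}) termwise. There is no methodological difference to report; your write-up is simply a fully fleshed-out version of the paper's one-line proof.
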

\begin{proof}

Using (\ref{eqn:wienerphase}), (\ref{eqn:Phinv}) and theorem \ref{thm:notpiresult2}, the result can be verified. The proof is left to the reader.
\end{proof}
\begin{remark}
The consequence of Theorem \ref{thm:notpiresult2} is that for nodes $i$ and $j$ that are neighbors but not two hop neighbors, or, nodes $i$ and $j$ that are neighbors and two hop neighbors, $\angle(W_{ji}(e^{\hat{j}\omega})) = \pi$ for all $\omega \in [0,2\pi)$ is possible when the system parameters satisfy a restrictive and specific set of conditions. In other words, aside for pathological cases, the converse of Theorem \ref{thm:expression} holds. We use the phase response of the Wiener filters as a criteria to differentiate between true edges and spurious edges in the moral graph $\mathcal{G}_M$ (obtained by Wiener filtering) to recover $\mathcal{G}$.\end{remark}
\subsection{Learning Algorithm}
We now present Algorithm $1$ that estimates the topology of a RC network based on temperature deviation time series measurements from the nodes (zones). The algorithm consists of two parts. The first part (Steps \ref{step1_a} - \ref{step1_b}) determines the multivariate Wiener filter $W_{ji}(z)$ to estimate the moral graph $\mathcal{G}_M$. Based on Theorem \ref{thm:sparse_wiener}, the edge set $\bar{\mathcal{E}}_K$  is populated by adding a link between each node pair $i,j$ if the $H_{\infty}$ norm of $W_{ji}(z)$ or $W_{ij}(z)$ is greater than a predefined threshold $\rho$. Thus $\bar{\mathcal{E}}_K$ estimates one and two hop neighbors. Next (Steps \ref{step2_a} - \ref{step2_b}), we consider a finite set of frequency points $\Omega$ in the interval $[0,2\pi)$ and evaluate the phase angle of the Wiener filters corresponding to edges in $\bar{\mathcal{E}}_K$. Based on Theorem \ref{thm:expression}, if the phase angle is within a pre-defined threshold $\tau$ of $\pi$, the algorithm designates the edge between the concerned nodes as spurious edges and prunes them from $\bar{\mathcal{E}}_K$ to produce edge set $\bar{{\cal E}}$, which is an estimate of the edge set $\mathcal{E}$ of the true topology. In the limit of infinite data samples from each agent, $\bar{{\cal E}} = {\cal E}$.  From the implementation sake, we allow lags upto an order $F$ in (\ref{wiener}). In order to account for finiteness of measurements at each node, a regularized version of multivariate Wiener filtering in (\ref{wiener}) is used as shown below:
\begin{align}
\smashoperator[r]{\inf_{h_{ji}\forall i\neq j}}~\mathbb{E}(T_j(k)-
		\smashoperator[lr]{\sum_{i=1,i\neq j}^{m}}~\sum_{L=-F}^{F}h_{ji}^{L}T_i(k-L))^2 +\gamma\smashoperator[lr]{\sum_{i=1,i\neq j}^{m}}
	\|h_{ji}\|_1, \label{regularized}
\end{align}

Here $\gamma$ is the regularization parameter and $h_{ji}=[h_{ji}^{-F},...,h_{ji}^{0},...,h_{ji}^{F}]$ and the Wiener filter is 	\begin{align}\label{regularfilter}
	    W_{ji}(z) = \sum_{L=-F}^{F}h_{ji}^{L}z^{-L}. 
\end{align}
\begin{algorithm}
\caption{RC network topology learning using multivariate Wiener Filtering}
\textbf{Input:} Time series of temperature deviation  $T_i(k)$ for each zone $i \in \{1, 2,... m\}$ and time-step $k$ in a building. Thresholds $\rho,\tau$. Frequency points $\Omega$. Regularization parameter $\gamma$.\\
\textbf{Output:} Reconstruct the true topology with an edge set $\bar{{\cal E}}$ \\
\begin{algorithmic}[1]
\ForAll{$l \in \{1,2,...,m\}$}\label{step1_a}
\State Compute the Wiener filter ${W}_{lp}(z)$ using the temperature time-series $\forall p \in \{1, 2,... m\} \setminus {l}$ \label{step1_a1}
\EndFor
\State Edge set $\bar{\cal E}_K \gets \{\}$ 
\ForAll{$l,p \in \{1,2,...,m\}, l\neq p$}
\If{$H_{\infty}({W}_{pl}(z)) > \rho$}\label{step_nonzero}
\State $\bar{\mathcal{E}}_K \gets \bar{\mathcal{E}}_K \cup \{(l,p)\}$
\EndIf
\EndFor\label{step1_b}
\State Edge set $\bar{\cal E} \gets \bar{\cal E}_K$ \label{step2_a}
\ForAll{$l,p \in \{1,2,...,m\}, l\neq p$}
\If{$\pi -\tau \leq |\angle(W_{pl}(e^{\hat{j}\omega}))| \leq \pi, \forall \omega \in \Omega$}
\State $\bar{\mathcal{E}} \gets \bar{\mathcal{E}} - \{(l,p)\}$
\EndIf
\EndFor \label{step2_b}
\State Error = $\frac{\text{Number of false edges}}{\text{Number of true edges}}$ 
\end{algorithmic}
\end{algorithm}

\section{Results}
In this section, we discuss performance of our learning algorithm in estimating the RC network topology of a 5 zone building using zonal temperatures obtained through EnergyPlus simulation. The building consists of a core zone and four perimeter zones as shown in Fig. \ref{fig:energyplus_5node}(a). Its RC network is obtained from {\cite{6315699}} and shown in Fig. \ref{fig:energyplus_5node}(b). From the true topology of the RC network (See \ref{fig:energyplus_5node}(c)), it is clear that every neighbor of a node is also its two hop neighbor. Thus the number of spurious links detected in the moral graph (See \ref{fig:energyplus_5node}(d)) are substantial and phase based criteria of Algorithm $1$ become crucial for exact reconstruction. The floor details of the zones in the building are Core: 149.66 sq. m., Perimeter I: 113.45 sq.m., Perimeter II: 67.3 sq.m., Perimeter III: 113.45 sq.m., Perimeter IV: 67.3 sq.m. and the height of the building is 3.05 m. The building location is Minneapolis, MN and the weather file used in EnergyPlus is obtained from \url{https://energyplus.net/weather}. The exogenous inputs to the system include lighting and electrical load, and schedules of people in the building. For our simulations, we consider two models of the electrical loads and lighting: (a) white Gaussian, and (b) time-correlated wide sense stationary processes. The correlated input values are  obtained  by  filtering  white  Gaussian  noise  through  1D digital  filter  in  MATLAB. The effect of solar radiation is common to all the zones and can be filtered. The EnergyPlus output comprises of zone air temperatures which is the instantaneous temperature averaged over space for each zone. This temperature data is obtained with one minute granularity and used for topology reconstruction for both white and correlated input distributions.

As detailed in Algorithm $1$, we first obtain the neighbor and two hop neighbor set by inspecting the $H_{\infty}$ norm of the Wiener filters. Figs. \ref{fig:mag}(a) $\&$ (b) shows the $H_{\infty}$ norms of the filters $W_{2i}$'s and $W_{3i}$'s between nodes $2, 3$ and all other nodes $i$ for white Gaussian input. Observe that the magnitude of $H_{\infty}$ norm of $W_{24}$ is relatively large despite $2,4$ being two hops away for both white and colored inputs. Hence it is clear that magnitude itself cannot be used to distinguish between true and two hop neighbors. Next we use the phase response of the filters to remove spurious links. In Figs. \ref{fig:phase}(a) and (b), we can clearly observe that the phase of $W_{24}$ is close to $\pi$ for both white and colored inputs. This indicates that nodes $2$ and $4$ are two hop neighbors and the link between them is pruned.
\begin{figure}[tb]
	\centering
	\begin{tabular}{cc}
		\includegraphics[width=0.45\columnwidth, height = 3 cm]{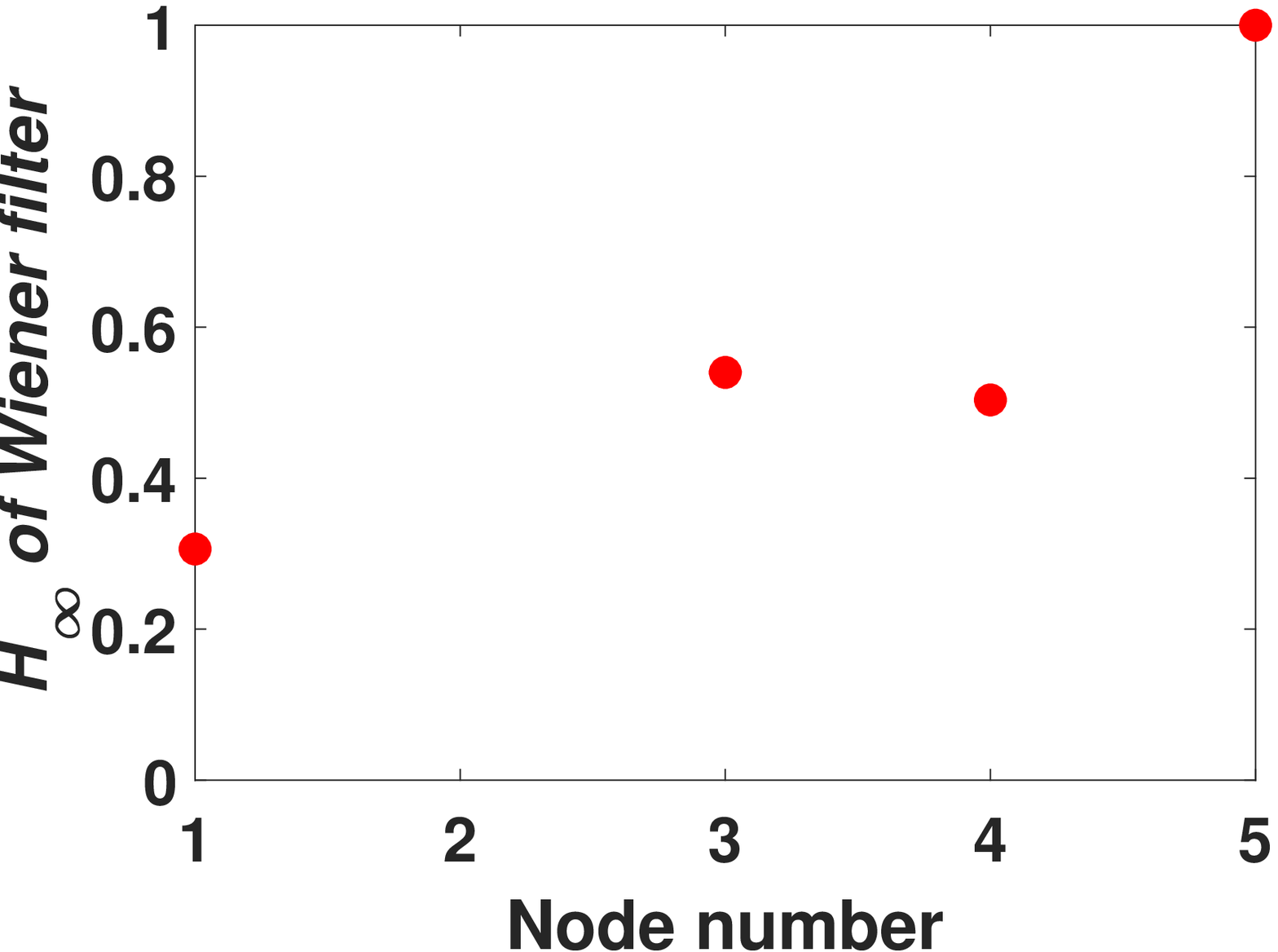} &
		\includegraphics[width=0.45\columnwidth, height = 3 cm]{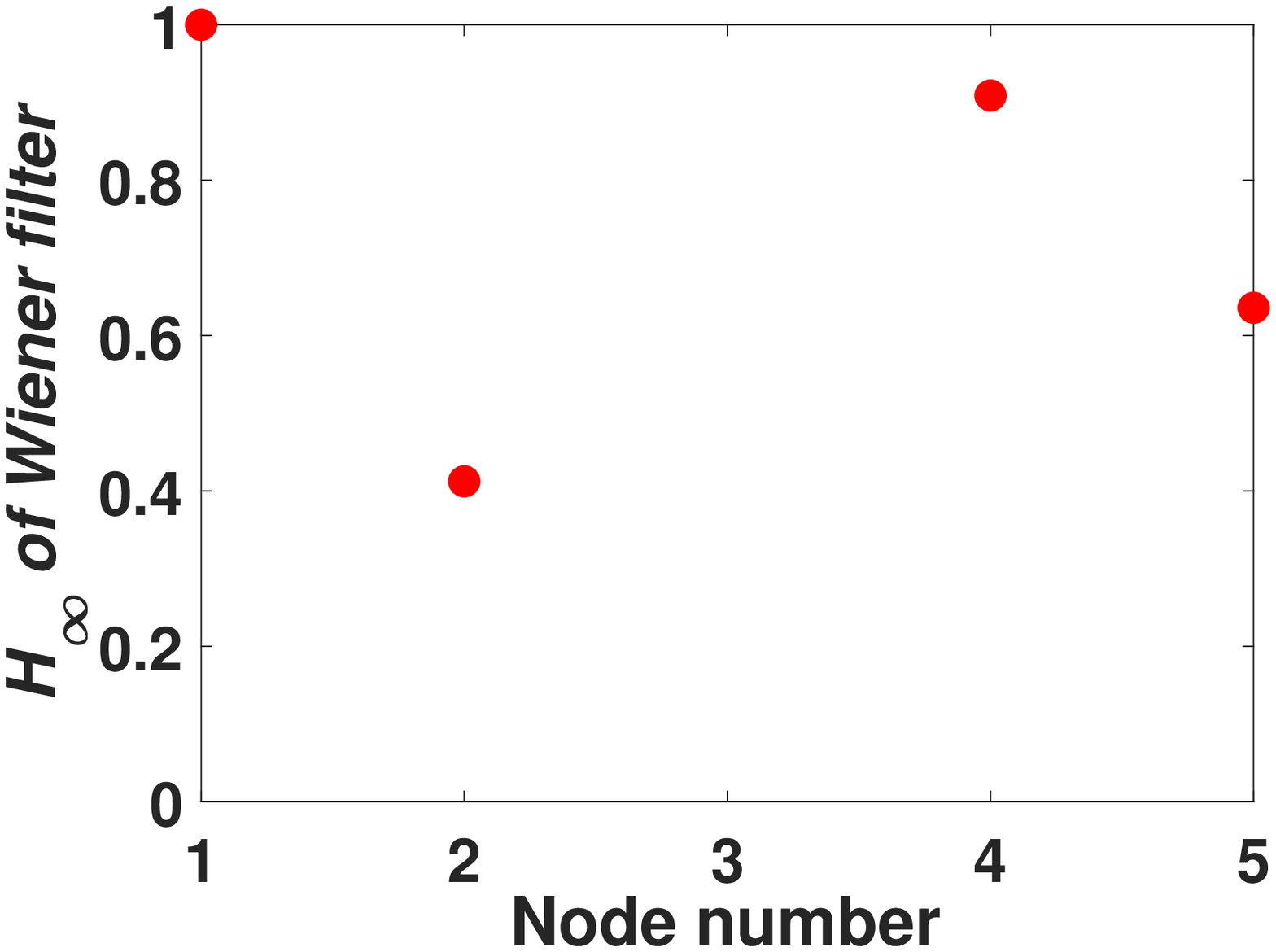}\\
		(a) & (b)\\
	\end{tabular}
	\caption{$H_{\infty}$ norm for Wiener filters between nodes $2,3$ and all nodes for different inputs distributions (sample size = $10^5$) (a) Node $2$, white Gaussian inputs, (b) Node $3$, white Gaussian inputs. \label{fig:mag}}
\end{figure}
\begin{figure}[tb]
	\centering
	\begin{tabular}{cc}
		\includegraphics[width=0.45\columnwidth, height = 3 cm]{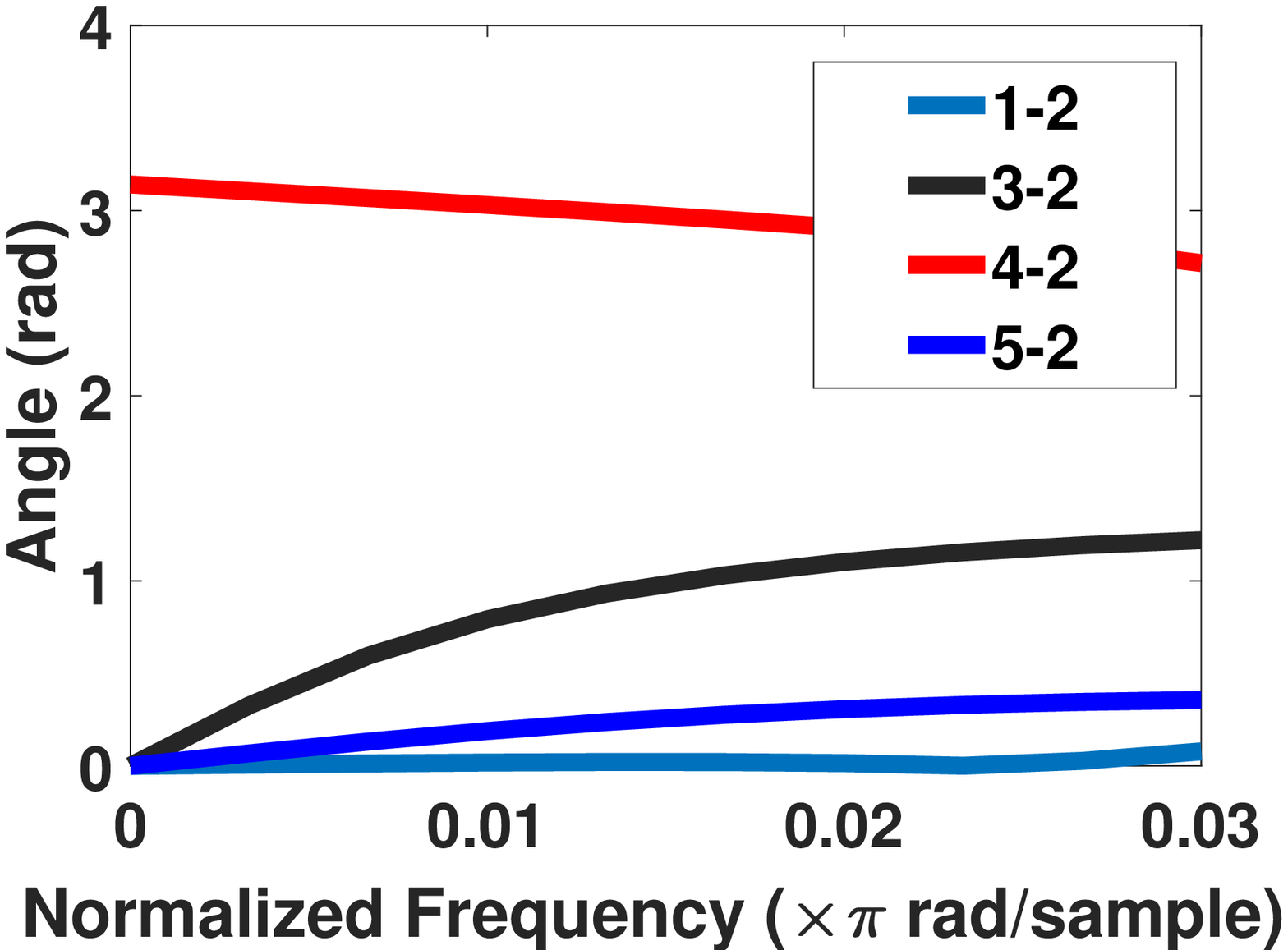} &
		\includegraphics[width=0.45\columnwidth, height = 3 cm]{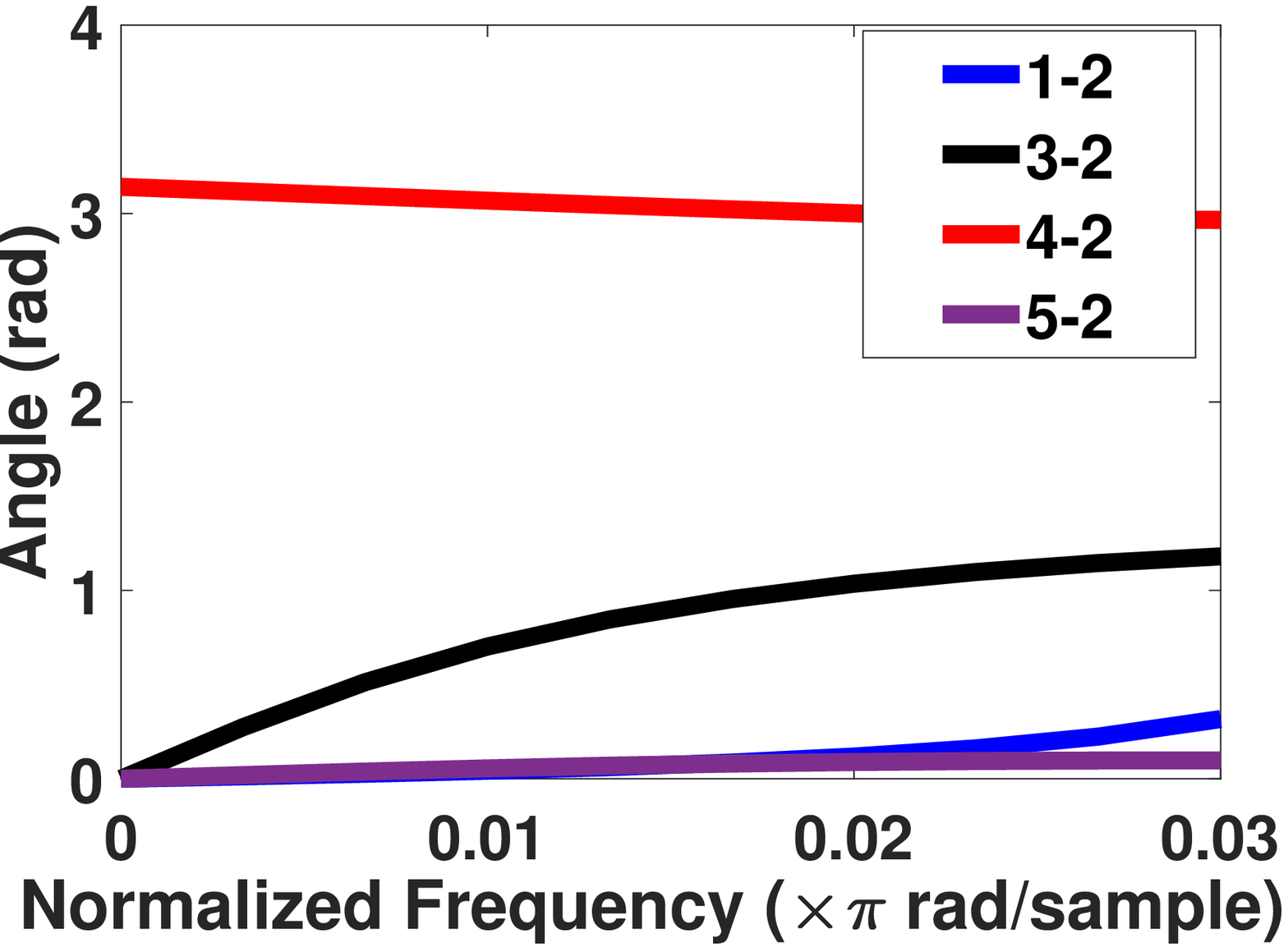}\\
		(a) & (b)
	\end{tabular}
	\caption{Absolute phase values for Wiener filters between node $2$ and all nodes for different inputs distributions (sample size = $10^5$) (a) white Gaussian inputs, (b) WSS inputs\label{fig:phase}}
\end{figure}
We present the average errors produced by our learning algorithm for different temperature samples sizes in Fig \ref{fig:errorplot}(a) and (b) for white Gaussian and colored inputs respectively. Further we demonstrate errors obtained by using the regularized version of the Wiener filter reconstruction (see (\ref{regularized})). Note that while the errors in either case goes down to zero (exact recovery), the performance of the regularized approach is better at low sample sizes owing to the use of regularizers. The results are particularly noteworthy given that EnergyPlus simulations need not satisfy all the theoretical assumptions necessary for consistency outlined in Theorem~\ref{thm:notpiresult2}. We compare our learning method with that of an one-step ahead regression framework \cite{pereira2010learning} used for estimation in LTI systems with white Gaussian inputs. The problem formulation for regression is as follows:

\begin{align}
\smashoperator[lr]{\inf_{{\{h_{ji}\}},{\forall i}}} \mathbb{E}(T_j(k)-\sum_{i=1}^{m}h_{ji}T_i(k-1))^2 +\gamma\sum_{i=1}^{m}	\|h_{ji}\|_2, \label{regularized_regression}
\end{align}
where $\gamma$ is the regularization parameter. Here, if the magnitude of $h_{ij}$ that solves (\ref{regularized_regression}) is greater than a specified threshold then a link between $i \& j$ is assessed as being present. In contrast to our learning algorithm, the topology reconstruction using regression performs poorly for both white Gaussian and colored inputs as shown in Fig. \ref{fig:errorplot}(a) and (b). This is due to the fact that the estimated coefficients in the regression problem are non-zero for non-neighbors. Figs. \ref{fig:coeff_regression}(a) and (b) show the estimated coefficients ($h_{2i}$ and $h_{3i}$) between nodes $2,3$ respectively and all other nodes for colored inputs. Note that $2,4$ are non-neighbors but regression based estimation yields a non-zero value. It is evident from Fig. \ref{fig:errorplot} that no matter how large the data set is, the reconstructed topology based on regression deviates from true topology by at least 50$\%$.

\begin{figure}[tb]
	\centering
	\begin{tabular}{cc}
		\includegraphics[width=0.46\columnwidth, height = 3cm]{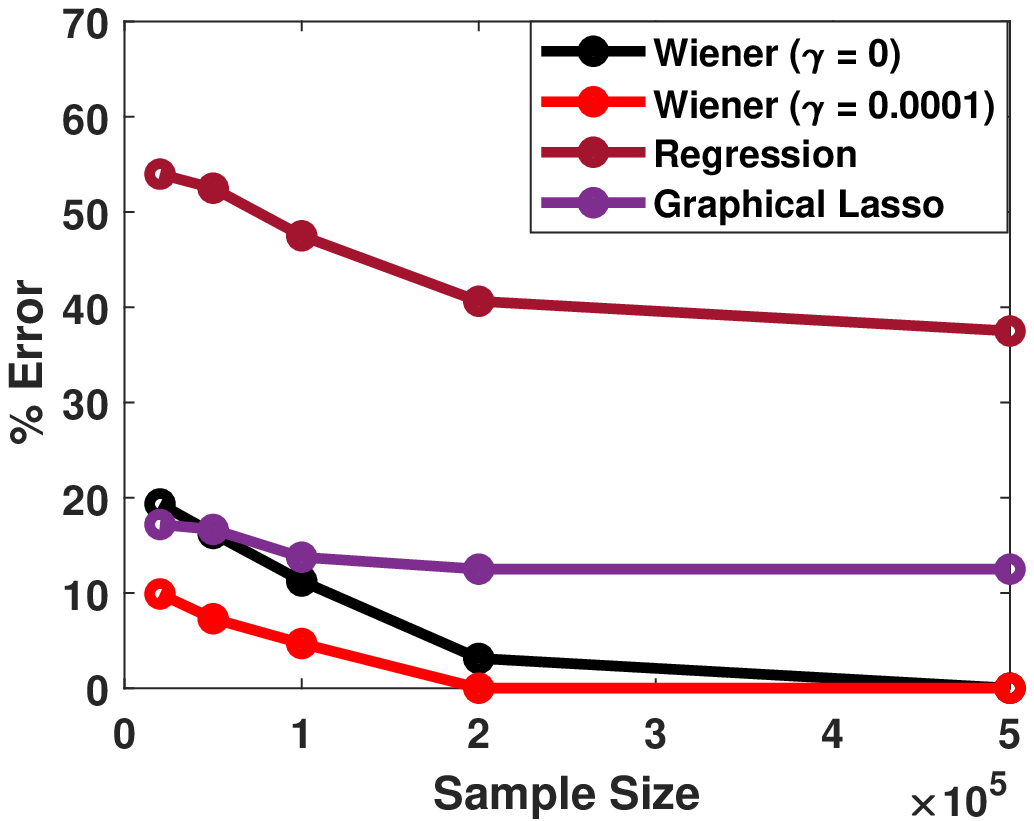} &\includegraphics[width=0.46\columnwidth, height = 3 cm]{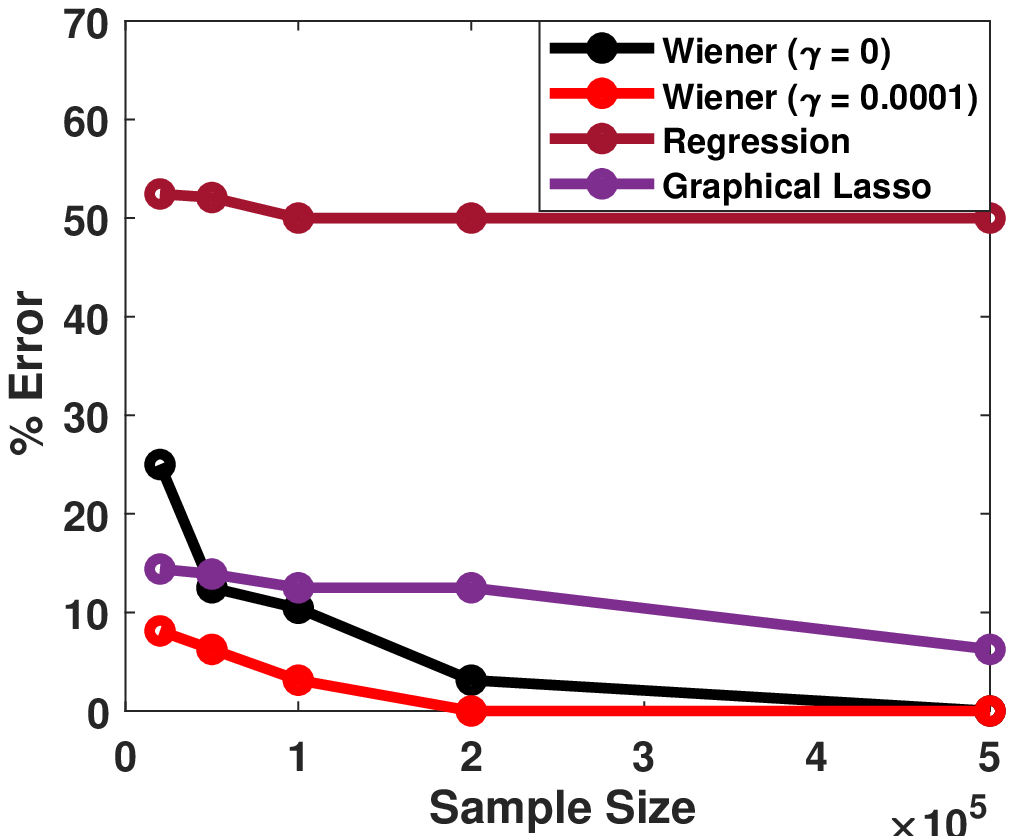}\\
		(a) & (b)
	\end{tabular}
\caption{Error percentage variation with number of samples per node for Algorithm $1$ (with and without regularizers), Regression \cite{pereira2010learning}, and GLASSO \cite{goyal2010modeling} when inputs are (a) White Gaussian (b) WSS input \label{fig:errorplot}}
\end{figure}
\begin{figure}[tb]
	\centering
	\begin{tabular}{cc}
		\includegraphics[width=0.45\columnwidth, height = 3 cm]{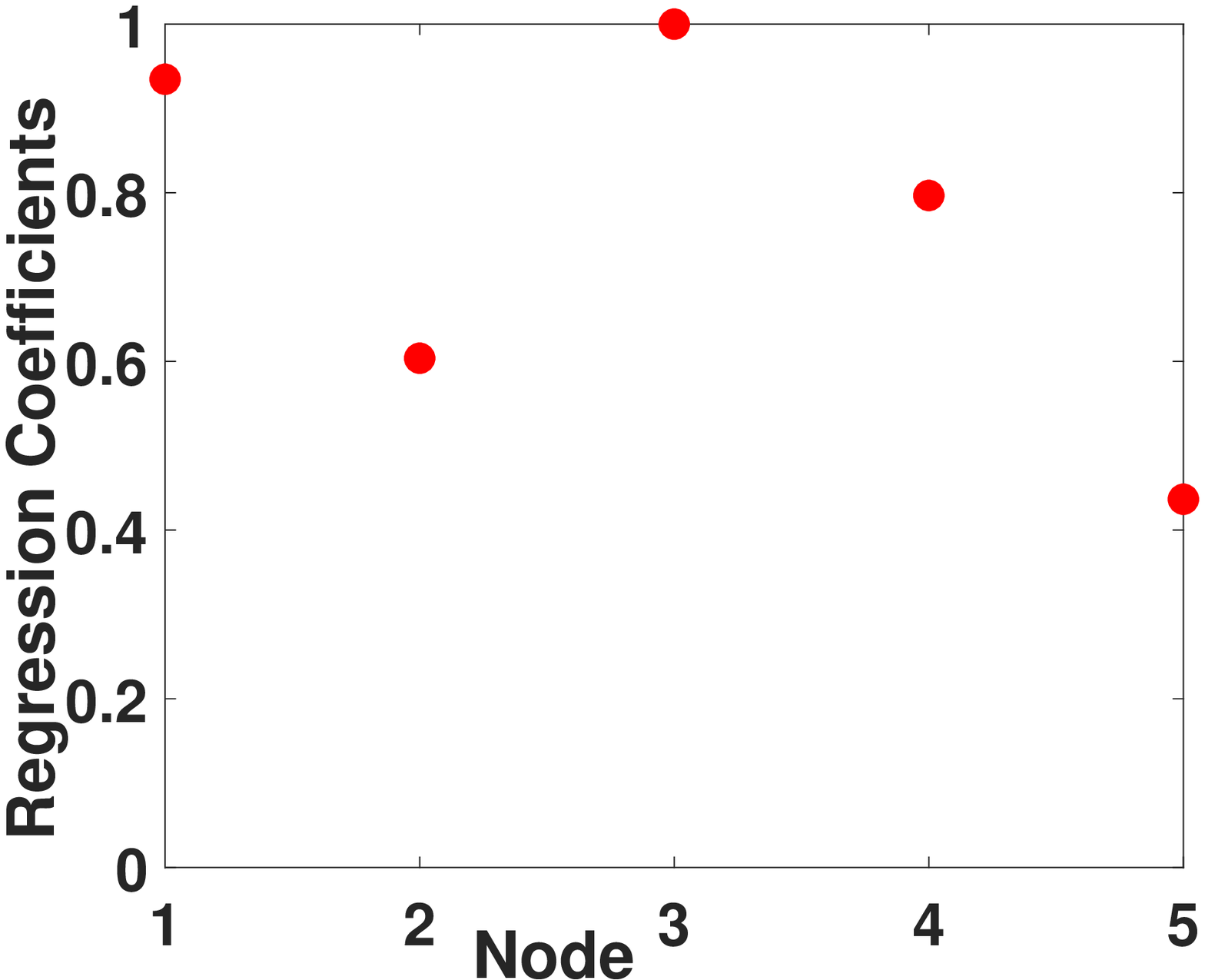} &
		\includegraphics[width=0.45\columnwidth, height = 3 cm]{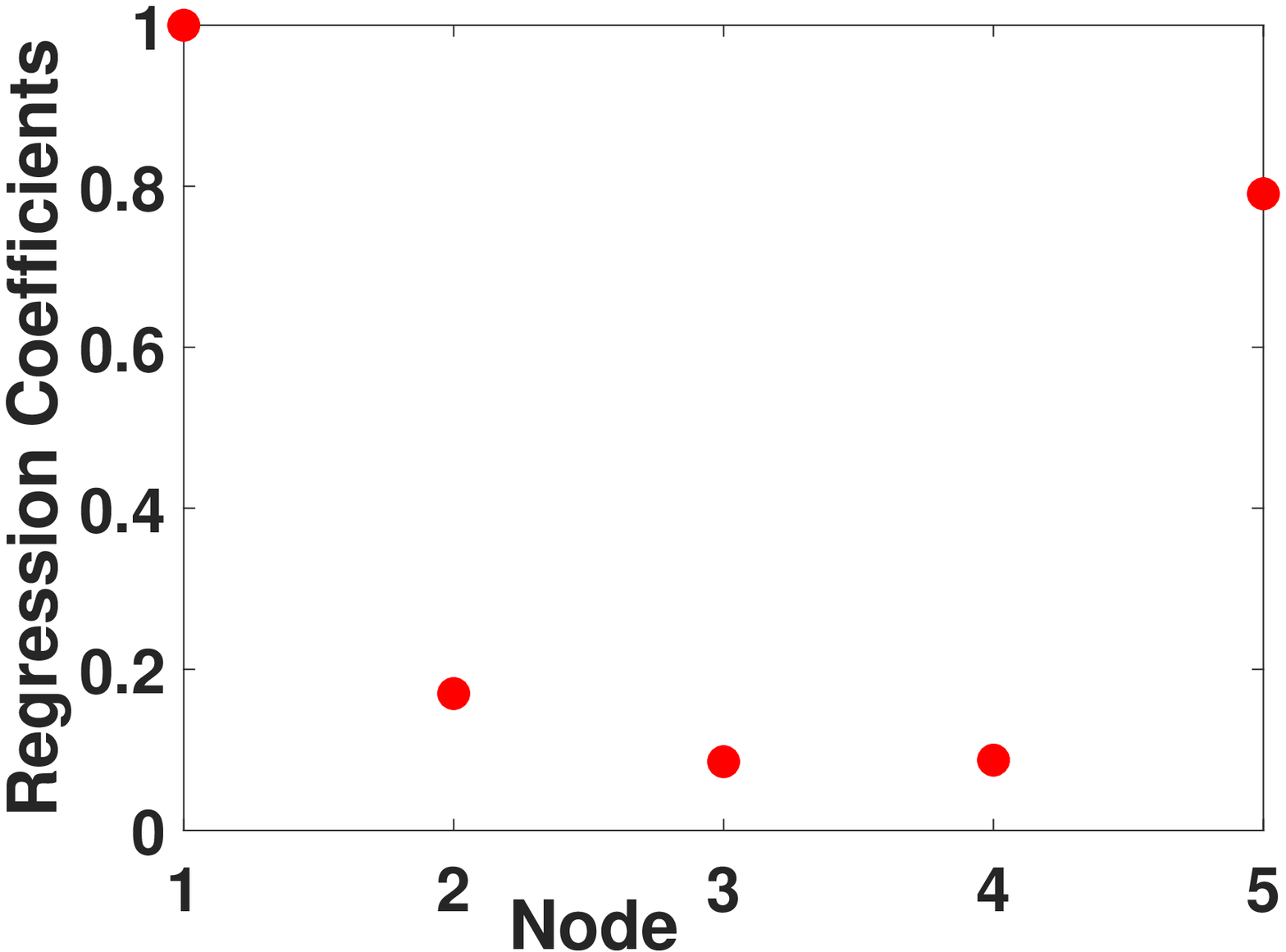}\\
		(a) & (b)
	\end{tabular}
	\caption{Coefficients recovered by regression (\ref{regularized_regression}) problem with WSS inputs for (a) Node 2 (b)Node 3 \label{fig:coeff_regression}}
\end{figure}
Finally, we also compare our learning algorithm with Graphical Lasso (\cite{friedman2008sparse}, \cite{drton2004model}) meant for static Gaussian graphical models that has been utilized for structure estimation \cite{6161387}. From Figs. \ref{fig:errorplot}(a) and (b), we observe that the performance of our learning algorithm is better than that of Graphical Lasso under both input regimes. 

\section{Conclusion}
In summary, we have provided analytic results with provable guarantees on near exact reconstruction of topology of the thermal dynamics of buildings. The resulting algorithm is an effective data driven approach for system identification of thermal dynamics of a building using RC network. The learning algorithm recovers the exact interaction topology of a group of zones and buildings. The error in reconstruction decreases as more and more samples are used for learning the interaction topology. In the limit of large number of samples, the presented algorithm recovers the exact interaction topology of the building, which has potential applications in HVAC control, demand response and building safety. The data is generated using EnergyPlus and the topology reconstruction is implemented in Python. Note that in situations of temperature data breach, the building structure details can be figured solely from the data and hence could pose a threat. Parameter estimation of the identified RC network after accounting for wall capacitance and applications in data-driven control and cyber security of buildings will be the focus of future work.

\section{Acknowledgments}
The authors H. Doddi, S. Talukdar, and M. V. Salapaka acknowledge the support of ARPA-E for supporting this research through the project titled \lq A Robust Distributed Framework for Flexible Power Grids\rq \ via grant no. DE-AR$0000701$ and Xcel Energy's Renewable Development Fund. D. Deka acknowledges the support of funding
from the Center for Non-Linear Studies at LANL and the Grid Modernization Initiative of the U.S. Department of Energy's Office of Electricity.



\end{document}